\definecolor{blueviolet}{rgb}{0.2, 0.2, 0.6}
\definecolor{webgreen}{rgb}{0,.5,0}
\definecolor{webbrown}{rgb}{.6,0,0}
\DeclareMathOperator{\Expect}{\mathbb{E}}
\DeclareFixedFont{\ttb}{T1}{txtt}{bx}{n}{9} 
\DeclareFixedFont{\ttm}{T1}{txtt}{m}{n}{9}  
\definecolor{deepblue}{rgb}{0,0,0.5}
\definecolor{deepred}{rgb}{0.6,0,0}
\definecolor{deepgreen}{rgb}{0,0.5,0}
\newcommand\pythonstyle{\lstset{
language=Python,
basicstyle=\ttm,
morekeywords={self},              
keywordstyle=\ttb\color{deepblue},
emph={MyClass,__init__},          
emphstyle=\ttb\color{deepred},    
stringstyle=\color{deepgreen},
frame=tb,                         
showstringspaces=false
}}
\newcommand\pythoninline[1]{{\pythonstyle\lstinline!#1!}}
\definecolor{orange}{RGB}{255,127,0}
\def\ket#1{\ensuremath{\mathinner{|{#1}\rangle}}}
\newcommand{\norm}[1]{\left\lVert#1\right\rVert}
\newcommand{\cA}{{\mathcal{A}}}
\newcommand{\cM}{{\mathcal{M}}}
\newtheorem{proposition}{Proposition}
\newtheorem{theorem}{Theorem}
\newtheorem{definition}{Definition}
\newtheorem{lemma}{Lemma}
\newtheorem*{criterion}{Criterion for classical access}
\newcommand{\Id}{I}
\algrenewcommand\alglinenumber[1]{\sf\scriptsize\color{black}{#1}}
\algrenewcommand\algorithmicrequire{\textbf{Input:}}
\algrenewcommand\algorithmicensure{\textbf{Output:}}
\begin{document}

\title{Revisiting dequantization and quantum advantage in learning tasks}
\date{\today}
\author{Jordan Cotler}
\email{jcotler@fas.harvard.edu}
\affiliation{Harvard Society of Fellows, Cambridge, MA 02138 USA}
\affiliation{Black Hole Initiative, Cambridge, MA 02138 USA}
\author{Hsin-Yuan Huang}
\email{hsinyuan@caltech.edu}
\affiliation{Institute for Quantum Information and Matter, Caltech, Pasadena, CA, USA}
\affiliation{Department of Computing and Mathematical Sciences, Caltech, Pasadena, CA, USA}
\author{Jarrod R.~McClean}
\email{jmcclean@google.com}
\affiliation{Google Quantum AI, 340 Main Street, Venice, CA 90291, USA}

\begin{abstract}
It has been shown that the apparent advantage of some quantum machine learning algorithms may be efficiently replicated using classical algorithms with suitable data access -- a process known as dequantization.
Existing works on dequantization compare quantum algorithms which take copies of an $n$-qubit quantum state $|x\rangle = \sum_{i} x_i |i\rangle$ as input to classical algorithms which have sample and query (SQ) access to the vector $x$.
In this note, we prove that classical algorithms with SQ access can accomplish some learning tasks exponentially faster than quantum algorithms with quantum state inputs.
Because classical algorithms are a subset of quantum algorithms, this demonstrates that SQ access can sometimes be significantly more powerful than quantum state inputs.
Our findings suggest that the absence of exponential quantum advantage in some learning tasks may be due to SQ access being too powerful relative to quantum state inputs.
If we compare quantum algorithms with quantum state inputs to classical algorithms with access to measurement data on quantum states, the landscape of quantum advantage can be dramatically different.
We remark that when the quantum states are constructed from exponential-size classical data, comparing SQ access and quantum state inputs is appropriate since both require exponential time to prepare.
\end{abstract}

\maketitle

\section{Introduction}

Quantum machine learning (QML) \cite{biamonte2017quantum} studies how quantum computers can be used to solve learning tasks, and what the potential advantages of quantum machine learning (ML) algorithms over classical machine learning algorithms are.
Previously, several quantum ML algorithms were identified as having a plausible exponential advantage over classical ML algorithms.
Examples include quantum principal component analysis \cite{lloyd2014quantum}, quantum support vector machines \cite{rebentrost2014quantum}, quantum linear system solvers \cite{harrow2009quantum}, and others \cite{biamonte2017quantum}.
Despite the promise of these algorithms, their advantages over classical algorithms have not previously been established.  This raises concerns that the desired advantages may be illusory~\cite{aaronson2015read}.

Relatedly, a recent series of works \cite{tang2018quantum, tang2019quantum, gilyen2018quantum, chia2020sampling} show that under the assumption of a particular classical data access model, there exist quantum-inspired classical ML algorithms that only have polynomial slowdown compared to their quantum counterparts.
These classical algorithms are often called the \emph{dequantizations} of their corresponding quantum algorithms.
For context, many QML algorithms take copies of an $n$-qubit quantum state as input, i.e.~multiple copies of a state $\ket{x} = \sum_{i} x_i \ket{i}$ with $x \in \mathbb{C}^{2^n}$.
The literature on dequantized algorithms attempts to provide an analogous resource of classical data inputs for classical algorithms, with the goal of making the classical and quantum settings be on more equal footing.

In particular, a dequantized algorithm has a certain form of sample access to components of a vector $x \in \mathbb{C}^{2^n}$, and can read out any chosen component $x_i$ to arbitrarily high precision. This is known as \textit{sample and query} (SQ) \textit{access} to the vector $x$, which we will define more precisely later.

A key question is whether SQ access is the appropriate classical analog of quantum state inputs.  If it is, then the work of~\cite{tang2018quantum, tang2019quantum, gilyen2018quantum, chia2020sampling} means we should not view certain QML algorithms as having an significant advantage over their dequantized classical counterparts.  On the other hand, if SQ access is not the appropriate classical analog of quantum state inputs, then we should view QML algorithms as being incommensurate with their dequantized counterparts; accordingly, the computational powers of the former and the latter should not be compared.

\begin{figure*}[t]
    \centering
    \includegraphics[width=0.8\textwidth]{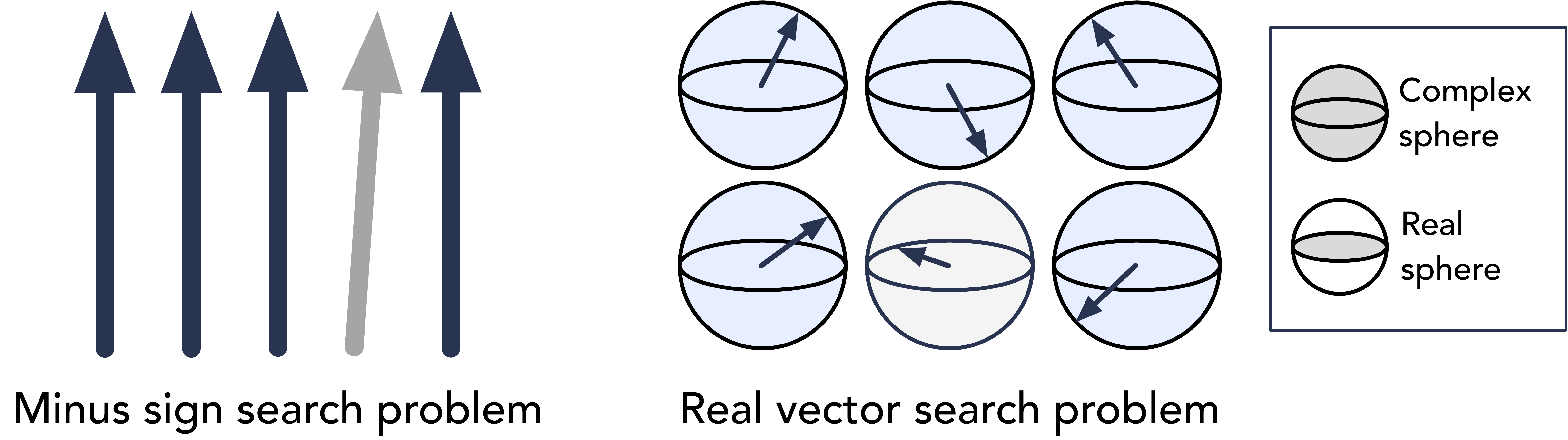}
    \caption{\emph{Illustration of the learning problems.} For each learning problem, a classical algorithm with SQ access is exponentially faster than any quantum algorithm with quantum state inputs.  The minus sign search problem is a toy problem where the learner is given a list of $2^n$-dimensional vectors which have pairwise distance exponentially close to zero; one of the vectors contains a single negative entry as its first component, and the rest of the vectors have all positive entries.  The learner must determine which vector has the negative entry.
    In the real vector search problem, the learner is given a set of random $2^n$-dimensional vectors, for which one is real-valued and the rest are complex-valued; the learner must determine which vector is the real-valued one.}
    \label{fig:LearningProblem}
\end{figure*}

In the present work, we show that in some learning tasks, classical algorithms with SQ access can be exponentially more powerful than all quantum algorithms with quantum state inputs.
This result seems paradoxical because one would expect quantum algorithms to be at least as powerful as classical algorithms, as the former can efficiently simulate the latter.
The resolution is that SQ access gives classical algorithms substantially more information about an $x \in \mathbb{C}^{2^n}$ than can be attained from the quantum state $|x\rangle$ by an efficient quantum algorithm.
This raises the concern of whether SQ access is the appropriate classical analog of quantum state inputs.
Ideally, the classical data access model should be at most as powerful as what a quantum algorithm could obtain from its input quantum states.

{\color{black} One can also view our results from a different perspective. When the quantum state inputs used in quantum ML algorithms \cite{lloyd2014quantum, rebentrost2014quantum, harrow2009quantum, biamonte2017quantum} are constructed from classical data stored in an exponentially large classical memory, the comparison to classical algorithms with SQ access is reasonable because constructing SQ access takes classical exponential time, while encoding an exponentially long vector to the amplitude of a quantum state takes quantum exponential time.
In this scenario, quantum algorithms that have been dequantized \cite{tang2018quantum, tang2019quantum, gilyen2018quantum, chia2020sampling} still fail to provide an exponential advantage.
Furthermore, our results show that quantum algorithms with quantum state inputs may be weaker than classical algorithms due to the deficiency of quantum state inputs compared to SQ access.
}

In the remainder of this note, we first review dequantization and discuss a toy problem to illustrate several basic proof strategies and some accompanying conceptual points.
Thereafter, we dive into our main exponential separation for the \textit{real vector search problem}.
Figure~\ref{fig:LearningProblem} depicts the toy problem and the real vector problem.
We follow with remarks on other complexity-theoretic advantages afforded by SQ access, other classical accesses which are alternative to SQ access, and the relationship between SQ access and quantum state data.
We conclude with a more general discussion.

\section{A brief review on dequantization}

Before presenting our main results, we review some basic definitions in the dequantization literature~\cite{tang2018quantum, tang2019quantum, gilyen2018quantum, chia2020sampling}.
We begin with the definition of SQ access.

\begin{definition}[SQ access]
The sample and query access $\mathrm{SQ}(x)$ to an exponentially long vector $x \in \mathbb{C}^{2^n}$ is an oracle that enables the following actions:
\begin{enumerate}
    \item $\mathrm{Sample}$: The oracle outputs $i \in \{1, \ldots, 2^n\}$ with probability $|x_i|^2 / \sum_{j} |x_j|^2$.
    \item $\mathrm{Query}(i)$: The oracle outputs $x_i$ to an arbitrary precision.
    \item $\mathrm{QueryN}$: The oracle outputs $\norm{x}_2$.
\end{enumerate}
\label{def:SQaccess}
\end{definition}
\noindent Each of the above operations may be performed with unit cost.  SQ access is central to the definition of dequantization~\cite{tang2018quantum, tang2019quantum, gilyen2018quantum, chia2020sampling} given below.
\begin{definition}[Dequantization via SQ access]
Let $\cA$ be a quantum algorithm with inputs $\ket{x_1}, \ldots, \ket{x_C}$, and which outputs either a state $\ket{y}$ or a value $\lambda$.
We say that we have dequantized $\cA$ if there is a classical algorithm such that given $\mathrm{SQ}(x_1), \ldots, \mathrm{SQ}(x_C)$, the classical algorithm can evaluate queries to $\mathrm{SQ}(y)$ or output $\lambda$ with (i) a similar performance guarantee as $\cA$, (ii) a number of SQ accesses polynomial in $C$, and (iii) a runtime that is at worst polynomially slower than the runtime of $\cA$.
\end{definition}
\noindent In plainer language, we say that a quantum algorithm can be dequantized if we can find a classical algorithm using SQ access with similar performance guarantees and at most polynomial resource overhead compared to a quantum algorithm with quantum state inputs.

In previous works \cite{tang2018quantum, tang2019quantum, gilyen2018quantum, chia2020sampling}, many quantum ML algorithms, including low-rank regression, recommendation systems, and principal component analysis, have been dequantized as per the definition above.

\section{Exponential separation for a toy problem}

As a warm-up, we consider the \textit{minus sign search problem} as a toy problem which can be trivially solved via SQ access, and which is easily shown to be exponentially hard in the quantum setting.
Our purpose is to illustrate a conceptual point before moving on to our main problem of interest.
\begin{definition}[Minus sign search problem]
Fix a constant $C = \mathcal{O}(1)$ and let $d = 2^n$.
Consider $x_1, \ldots, x_C \in \mathbb{C}^{d}$, where one of the vectors $x_{k^*}$ is $(-\frac{1}{\sqrt{d}},\frac{1}{\sqrt{d}},\frac{1}{\sqrt{d}},...,\frac{1}{\sqrt{d}})$ and the rest are $(\frac{1}{\sqrt{d}},\frac{1}{\sqrt{d}},\frac{1}{\sqrt{d}},...,\frac{1}{\sqrt{d}})$.
The goal is to output the vector $x_{k^*}$, namely the vector which has a component with a minus sign.\label{def:minussearch}
\end{definition}
\noindent We immediately have the following result:
\begin{proposition}
A classical algorithm with SQ access to the set of vectors $x_1, \ldots, x_C$ can solve the minus sign search problem in constant time.
\end{proposition}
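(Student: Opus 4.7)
The proof is essentially a one-liner exploiting the $\mathrm{Query}$ primitive in Definition~\ref{def:SQaccess}, so the plan is to package that observation carefully and confirm that all the hidden costs are $\mathcal{O}(1)$.

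The plan is to use the $\mathrm{Query}$ capability of SQ access directly. For each $k \in \{1, \ldots, C\}$, I would invoke $\mathrm{Query}(1)$ on $\mathrm{SQ}(x_k)$ to read the first component $(x_k)_1$, and output the unique index $k^*$ for which $(x_{k^*})_1 < 0$. By the specification of the problem, exactly one such index exists, and the sign of $(x_k)_1$ is $\pm 1/\sqrt{d}$, so reading the single bit of sign suffices; arbitrary precision is not even needed.

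The cost analysis is then immediate: $\mathrm{Query}(1)$ has unit cost by Definition~\ref{def:SQaccess}, we perform it $C$ times, and $C = \mathcal{O}(1)$ is a constant independent of $n$. Hence the total runtime is $\mathcal{O}(C) = \mathcal{O}(1)$, which establishes the claim.

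The only conceivable obstacle is whether ``constant time'' must also account for comparing a returned complex number against zero; since we only need to detect a sign difference between two values of equal magnitude $1/\sqrt{d}$, this is a single comparison per vector and absorbs into the constant. No sampling or norm-query operation is invoked, so the conclusion is robust to any reasonable refinement of the cost model.
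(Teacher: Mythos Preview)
Your proof is correct and matches the paper's own argument essentially line for line: invoke $\mathrm{Query}(1)$ on each of the $C$ vectors, read off which first component is negative, and return that vector (the paper phrases the output as returning the oracle $\mathrm{SQ}(x_{k^*})$). The additional cost-model remarks you include are fine but not needed for the paper's level of rigor here.
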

\begin{proof}
The classical algorithm simply performs Query($1$) on the first components of each of the $x_i$'s, and checks which one is negative.
The algorithm then outputs the oracle $\text{SQ}(x_{k^*})$.
\end{proof}

In short, the classical algorithm with the SQ oracle simply looks at the first component of each $x_i$ vector, and checks where the minus sign is.  However, quantum algorithms only have access to the states $| x_i\rangle$.  But the overlap between $|x_{k^*}\rangle$ and any $|x_i\rangle$ for $i \not = k^*$ is $1 - 2/d$ which is exponentially close to unity, and so we expect the minus sign search problem to be exponentially hard in the quantum setting.  Let us make this more precise and then comment on the implications. We begin with the following standard lemma.

\begin{lemma}[Two state discrimination]
A quantum algorithm can distinguish two states $\rho_1$ and $\rho_2$ with probability at least $0.9$ if and only if $\frac{1}{2} + \frac{1}{2}\|\rho_1 - \rho_2\|_1 \geq 0.9$, or equivalently $\|\rho_1 - \rho_2\|_1 \geq 0.8$.\label{lemma:twostate}
\end{lemma}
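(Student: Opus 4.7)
The plan is to prove this as a direct instance of the Holevo--Helstrom theorem. Any quantum algorithm that attempts to discriminate between $\rho_1$ and $\rho_2$ (given with equal prior) can, without loss of generality, be cast as a two-outcome POVM $\{E_1, E_2 = I - E_1\}$ applied to the input state, where the algorithm declares ``$\rho_k$'' upon obtaining outcome $k$. Any adaptive procedure involving ancillas, unitaries, and classical post-processing dilates to exactly this form by absorbing everything into the effect $E_1$ on the input register. The success probability is then
\begin{equation*}
P_{\mathrm{succ}} = \tfrac{1}{2}\tr(E_1 \rho_1) + \tfrac{1}{2}\tr\!\bigl((I - E_1)\rho_2\bigr) = \tfrac{1}{2} + \tfrac{1}{2}\tr\!\bigl(E_1(\rho_1 - \rho_2)\bigr),
\end{equation*}
so the task reduces to maximizing $\tr(E_1(\rho_1-\rho_2))$ over effects $0 \leq E_1 \leq I$.

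Next I would apply the Jordan--Hahn decomposition to the Hermitian traceless operator $\rho_1 - \rho_2$, writing $\rho_1 - \rho_2 = \Delta_+ - \Delta_-$ with $\Delta_\pm \geq 0$ having orthogonal supports and (by tracelessness) equal trace. For any effect $E_1$ this gives $\tr(E_1(\rho_1-\rho_2)) \leq \tr(E_1 \Delta_+) \leq \tr(\Delta_+) = \tfrac{1}{2}\norm{\rho_1 - \rho_2}_1$, where I adopt the convention (consistent with the factors in the lemma) that $\norm{\cdot}_1$ denotes the trace distance $\tfrac{1}{2}\tr|\cdot|$. Equality is saturated by taking $E_1 = \Pi_+$, the spectral projector onto the support of $\Delta_+$, which is the Helstrom measurement. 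Substituting this choice back yields the tight value $P_{\mathrm{succ}}^{\max} = \tfrac{1}{2} + \tfrac{1}{2}\norm{\rho_1 - \rho_2}_1$, so $P_{\mathrm{succ}}^{\max} \geq 0.9$ is equivalent to $\norm{\rho_1 - \rho_2}_1 \geq 0.8$, and the ``if and only if'' follows from the bound being simultaneously achievable and unimprovable.

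There is essentially no technical obstacle; this is a textbook version of the Helstrom bound. The only conceptual point worth flagging is that both directions of the lemma rely on tightness: the ``if'' direction requires an explicit achievability witness (namely, $\Pi_+$), while the ``only if'' direction requires that no other effect can beat this value. Both rest on the Jordan--Hahn decomposition of $\rho_1 - \rho_2$, which is the only substantive ingredient.
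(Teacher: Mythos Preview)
The paper does not actually prove this lemma; it introduces it as ``the following standard lemma'' and invokes it without proof. Your proposal supplies exactly the textbook Holevo--Helstrom argument one would expect, so in that sense you are not deviating from the paper --- you are filling in what the paper leaves implicit.

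One small slip worth fixing: you write $\tr(\Delta_+) = \tfrac{1}{2}\norm{\rho_1 - \rho_2}_1$ and then immediately declare that $\norm{\cdot}_1$ means the trace distance $\tfrac{1}{2}\tr|\cdot|$. These two statements are inconsistent. With the trace-distance convention you adopt, the correct identity is $\tr(\Delta_+) = \norm{\rho_1 - \rho_2}_1$ (no extra $\tfrac{1}{2}$), which then feeds directly into $P_{\mathrm{succ}}^{\max} = \tfrac{1}{2} + \tfrac{1}{2}\norm{\rho_1-\rho_2}_1$ as you claim. Alternatively, keep $\norm{\cdot}_1 = \tr|\cdot|$ as the Schatten $1$-norm and accept $P_{\mathrm{succ}}^{\max} = \tfrac{1}{2} + \tfrac{1}{4}\norm{\rho_1-\rho_2}_1$; then the lemma as stated in the paper carries an implicit factor of $\tfrac{1}{2}$ in its norm. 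Either way the argument is sound --- this is purely a bookkeeping issue, not a gap.
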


\noindent Now we can proceed to show exponential hardness for quantum algorithms with quantum state inputs:

\begin{proposition} Any quantum algorithm that can access copies of $|x_1\rangle,...,|x_C\rangle$ requires $\Omega(2^n)$-time to solve the minus sign search problem.
\label{prop:toyquantumbound1}
\end{proposition}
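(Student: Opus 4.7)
The plan is to reduce the search problem to a two-hypothesis discrimination task and then apply Lemma \ref{lemma:twostate}. I would let $H_j$ denote the hypothesis that $k^* = j$ and observe that any quantum algorithm that solves the full search problem with probability at least $0.9$ must, in particular, distinguish $H_1$ from $H_2$ with probability at least $0.9$; any runtime lower bound for this two-instance discrimination then transfers to the original problem.

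Next, I would model the algorithm's input. If the algorithm runs in time $T$, it can consume $T_i$ copies of $\ket{x_i}$ with $\sum_i T_i \leq T$. Under hypothesis $H_j$ the overall input is the pure state
\begin{equation}
\ket{\psi_j} \,=\, \bigotimes_{i=1}^C \ket{x_i^{(j)}}^{\otimes T_i},
\end{equation}
where $\ket{x_j^{(j)}}$ is the minus-sign state and $\ket{x_i^{(j)}}$ for $i \neq j$ is the uniform positive superposition. Since $H_1$ and $H_2$ differ only in the identities of $\ket{x_1}$ and $\ket{x_2}$, multiplicativity of pure-state overlaps gives
\begin{equation}
|\langle \psi_1 | \psi_2 \rangle| \,=\, \bigl(1 - \tfrac{2}{d}\bigr)^{T_1 + T_2},\quad d = 2^n,
\end{equation}
where I used the excerpt's observation that $\langle +|-\rangle = 1 - 2/d$. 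I would note that adaptivity gives no improvement: the copy schedule can depend only on indices $i \in \{1,\dots,C\}$, which are the same under both hypotheses, so a minimax over schedules $(T_1,\dots,T_C)$ yields the same worst-case bound.

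Finally, I would invoke the two-state discrimination lemma. Because the inputs are pure states, $\|\rho_1 - \rho_2\|_1 = 2\sqrt{1 - |\langle \psi_1 | \psi_2 \rangle|^2}$, so Lemma \ref{lemma:twostate} requires $(1 - 2/d)^{2(T_1 + T_2)} \leq 0.84$. Using $\log(1 - 2/d) = -\Theta(1/d)$ then forces $T_1 + T_2 = \Omega(d) = \Omega(2^n)$, hence $T = \Omega(2^n)$. I expect the only real subtlety to be ensuring that the reduction remains valid against the most general quantum strategy---joint measurements across the $C$ registers, entangled probes, ancillary workspace, and interleaved or adaptive access to copies---but this is handled uniformly by Lemma \ref{lemma:twostate}, which bounds the success probability of \emph{any} two-hypothesis quantum test purely in terms of the trace distance between the two candidate input density matrices.
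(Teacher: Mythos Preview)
Your proposal is correct and follows essentially the same approach as the paper: reduce to the two-hypothesis case $k^*\in\{1,2\}$, compute the pure-state overlap $(1-2/d)^{T_1+T_2}$ by multiplicativity, and apply Lemma~\ref{lemma:twostate} to force $T_1+T_2=\Omega(d)$. The only cosmetic difference is that the paper passes through $N=\max(N_1,N_2)$ rather than $T_1+T_2$, and your explicit use of $\|\rho_1-\rho_2\|_1 = 2\sqrt{1-|\langle\psi_1|\psi_2\rangle|^2}$ is in fact cleaner than the paper's displayed expression.
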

\begin{proof}
Consider the $C = 2$ case, since the $C > 2$ cases are at least as hard.  Let $|x_*\rangle$ be the state with the minus sign, and $|x\rangle$ be the state without the minus sign.  So either $|x_1\rangle = |x_*\rangle$ and $|x_2\rangle = |x\rangle$, or $|x_1\rangle = |x\rangle$ and $|x_2\rangle = |x_*\rangle$.  Suppose the quantum algorithm uses $N_1$ copies of $|x_1\rangle$ and $N_2$ copies of $|x_2\rangle$.  If the quantum algorithm solves the minus sign distinction problem, then it can distinguish between $|x_*\rangle \langle x_*|^{\otimes N_1} \otimes |x\rangle \langle x|^{\otimes N_2}$ and $|x\rangle \langle x|^{\otimes N_1} \otimes |x_*\rangle \langle x_*|^{\otimes N_2}$.

Let $N = \max(N_1, N_2)$.  Then it is strictly easier to distinguish between $|x_*\rangle \langle x_*|^{\otimes N} \otimes |x\rangle \langle x|^{\otimes N}$ and $|x\rangle \langle x|^{\otimes N} \otimes |x_*\rangle \langle x_*|^{\otimes N}$.  But by Lemma~\ref{lemma:twostate}, if we want to distinguish with probability at least $0.9$, then we require
\begin{align}
&\left\| |x_{*}\rangle\langle x_{*}|^{\otimes N} \!\otimes\! |x\rangle \langle x|^{\otimes N} - |x\rangle\langle x|^{\otimes N} \!\otimes\! |x_*\rangle \langle x_*|^{\otimes N} \right\|_1  \nonumber \\
& \qquad \qquad \qquad \qquad \qquad \quad = \left(1 - \frac{2}{d}\right)^{4N} \geq 0.8\,,
\end{align}
which implies $N \geq \Omega(2^n)$.  Since $N = \max(N_1, N_2)$ is lower bounded by $\Omega(2^n)$, the time complexity of the quantum algorithm is likewise lower bounded by $\Omega(2^n)$.
\end{proof}

The conceptual takeaway is that SQ access can render an $x_i$ and $x_j$ distinguishable even if their corresponding state representations $|x_i\rangle$ and $|x_j\rangle$ have an overlap very close to unity.  This is what leads to the advantage of the classical setting over the quantum setting for this toy problem.  In the next section, we provide a more surprising result that is conceptually different.

\section{Exponential separation \\ for the real vector search problem}

In the toy problem above, each pair of vectors $x_i, x_j$ has a distance exponentially close to zero, which leads to a deficiency for quantum algorithms with quantum state inputs.
One may wonder if this is the only scenario that could lead to SQ access being exponentially more powerful than quantum state inputs.
Here we consider a different learning task, the \emph{real vector search problem}, where each pair of distinct vectors $x_i, x_j$ have constant pairwise distance (which is only violated with exponentially small probability).
Nonetheless, we show that for this task, classical algorithms with SQ access are still exponentially faster than quantum algorithms with quantum state inputs.
In particular, we illustrate a constant-versus-exponential separation between the two settings.
The definition of the real vector search problem is given as follows.

\begin{definition}[Real vector search problem]
Fix a constant $C = \mathcal{O}(1)$.
Consider $x_1, \ldots, x_C \in \mathbb{C}^{2^n}$, where one of the vectors $x_{k^*}$ is sampled uniformly from the real sphere $\{x \in \mathbb{R}^{2^n} : \norm{x}_2 = 1\}$ and the rest are sampled uniformly from the complex sphere $\{x \in \mathbb{C}^{2^n} : \norm{x}_2 = 1\}$.
The goal is to output the vector $x_{k^*}$.\label{def:realsearch}
\end{definition}

When we sample vectors from a $2^n$-dimensional real/complex sphere, standard concentration inequalities show that the distance between the vectors will almost always be greater than a constant independent of $2^n$.
Despite the vectors being far away from one another,
the constant-versus-exponential complexity separation still persists.
We begin by showing that this learning problem can be solved efficiently with classical algorithms under SQ access.
In particular, we consider the output of the classical algorithm to be an SQ access to the real vector $x_{k^*}$ among the set of $C$ vectors.

\begin{proposition}
A classical algorithm with SQ access to the set of vectors $x_1, \ldots, x_C$ can solve the real vector search problem in constant time.
\end{proposition}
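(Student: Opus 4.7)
The plan is to exploit the fact that $\mathrm{Query}(i)$ returns the exact complex value of the $i$-th component, which directly reveals whether a vector has real or non-real entries. A uniformly random vector on the $2^n$-dimensional complex sphere almost surely has every component with strictly nonzero imaginary part, whereas the target vector $x_{k^*}$ is purely real by construction. Inspecting a single component of each of the $C$ input vectors therefore suffices.

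Concretely, the classical algorithm would loop over $k = 1, \ldots, C$ and call $\mathrm{Query}(1)$ on each $\mathrm{SQ}(x_k)$ to obtain $(x_k)_1$, then identify the unique index $k^*$ with $\mathrm{Im}((x_{k^*})_1) = 0$, and output the oracle $\mathrm{SQ}(x_{k^*})$ — which is simply a pass-through of the corresponding input oracle, as required by the dequantization output format. Since $C = \mathcal{O}(1)$ and each $\mathrm{Query}$ call has unit cost, the total runtime is $\mathcal{O}(1)$.

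For correctness, the uniform measure on the complex sphere $\{x \in \mathbb{C}^{2^n} : \norm{x}_2 = 1\}$ is absolutely continuous with respect to the natural surface measure, so for each vector drawn from the complex sphere the event $\mathrm{Im}((x_k)_1) = 0$ has probability zero. A union bound over the $C - 1 = \mathcal{O}(1)$ complex-sphere inputs shows that, with probability one, the first component of every non-real vector has nonzero imaginary part, and the algorithm outputs the correct $k^*$.

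The only mild subtlety is that this argument leans on the ``arbitrary precision'' clause in Definition~\ref{def:SQaccess}: testing whether an imaginary part is exactly zero requires exact values. If one prefers a finite-precision formulation, the same conclusion still holds with high probability, since a uniform complex-sphere component has imaginary part of typical magnitude $\Theta(1/\sqrt{2^n})$, and any precision threshold polynomial in $n$ separates the real vector from the complex ones with probability $1 - o(1)$. There is no real combinatorial or analytic obstacle in this direction; the genuinely nontrivial content of the separation is the matching quantum lower bound, which must be established separately in the companion statement for quantum-state inputs.
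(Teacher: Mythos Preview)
Your proof is correct and essentially identical to the paper's: query the first coordinate of each $x_k$, check whether its imaginary part vanishes, and use a union bound to conclude that with probability one only $x_{k^*}$ passes this test. One small quibble in your finite-precision aside: since the imaginary part of a coordinate of a uniform complex unit vector has typical magnitude $\Theta(2^{-n/2})$, an inverse-polynomial precision threshold would \emph{not} separate it from zero; what you need is precision $2^{-\Theta(n)}$, i.e.\ polynomially many bits, which is still unit cost under the SQ model's ``arbitrary precision'' convention.
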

\begin{proof}
The classical algorithm goes though each $i = 1, \ldots, C$ and queries the first component of $x_i$, denoted by $x_{i 1}$, using one access to Query($1$). For any $i \neq k^*$, we have that $\text{Im}[x_{i 1}] \neq 0$ almost surely.
Hence, by a union bound, with probability one the only index $i$ with $\text{Im}[x_{i 1}] = 0$ is $k^*$.
Thus the classical algorithm can find $k^*$ in a time constant in $n$.
The classical algorithm then outputs $\text{SQ}(x_{k^*})$.
The total time used by the classical algorithm with SQ access is constant with respect to $n$, and only linear in $C$.
\end{proof}

Surprisingly, the real vector search problem actually requires exponential time to solve if one uses quantum algorithms with quantum state inputs.
In short, we will show that the search problem may be reduced to a quantum state discrimination problem requiring an exponential number of state copies to accomplish the task with high probability.

\begin{theorem}
Any quantum algorithm that can access copies of $\ket{x_1}, \ldots, \ket{x_C}$ requires $\Omega(2^{n/2})$-time to solve the real vector search problem.
\end{theorem}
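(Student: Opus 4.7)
The plan follows the two-state discrimination template of Proposition~\ref{prop:toyquantumbound1}, but with fixed pure inputs replaced by Haar-averaged mixed states, and relies on a harmonic-analysis estimate of the resulting trace distance. First, I reduce to $C=2$ (larger $C$ are at least as hard, as there) and set $N = \max(N_1, N_2)$. After averaging over the Haar draws of $x_1$ and $x_2$, the two equally-likely scenarios $k^* = 1, 2$ correspond to the mixed states
\begin{align*}
\rho_1 = \rho_R^N \otimes \rho_C^N \quad \text{and} \quad \rho_2 = \rho_C^N \otimes \rho_R^N,
\end{align*}
where $\rho_R^N := \E_{x \sim \text{real}}[(\ket{x}\!\bra{x})^{\otimes N}]$ and similarly $\rho_C^N$ for the complex Haar measure. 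Lemma~\ref{lemma:twostate} together with the triangle inequality then gives
\begin{align*}
0.8 \leq \|\rho_1 - \rho_2\|_1 \leq 2\,\|\rho_R^N - \rho_C^N\|_1,
\end{align*}
so it suffices to prove $\|\rho_R^N - \rho_C^N\|_1 = O(N^2/d)$ with $d = 2^n$; this forces $N = \Omega(2^{n/2})$.

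I would bound the trace distance by exploiting the $O(d)$-symmetry of both ensembles. Each state lives in $\mathrm{Sym}^N(\mathbb{C}^d)$, and $\rho_C^N = \Pi_{\mathrm{sym}}/D_N$ is the maximally mixed state there with $D_N := \binom{d+N-1}{N}$. The state $\rho_R^N$ is additionally invariant under the diagonal $O(d)$-action, so by Schur's lemma it acts as a scalar on each isotypic block of the harmonic (Howe) decomposition
\begin{align*}
\mathrm{Sym}^N(\mathbb{C}^d) \;=\; \bigoplus_{k=0}^{\lfloor N/2\rfloor} r^{2k}\,\mathcal{H}_{N-2k},
\end{align*}
where $\mathcal{H}_m$ denotes the space of degree-$m$ harmonic polynomials (an irreducible $O(d)$-representation). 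Denoting by $\lambda_{N,N-2k}$ the eigenvalue of $\rho_R^N$ on the $k$-th block, one obtains
\begin{align*}
\|\rho_R^N - \rho_C^N\|_1 \;=\; \sum_{k=0}^{\lfloor N/2\rfloor} (\dim\mathcal{H}_{N-2k})\,\left|\lambda_{N,N-2k} - \tfrac{1}{D_N}\right|,
\end{align*}
with each $\lambda_{N,N-2k}$ computable from the real-Haar moment identity $\E_x[\prod_i x_i^{2a_i}] = \prod_i(2a_i - 1)!!/\prod_{j=0}^{N-1}(d+2j)$ (for $\sum_i a_i = N$) combined with the standard formula $\dim\mathcal{H}_m = \binom{m+d-1}{d-1} - \binom{m+d-3}{d-1}$.

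The main obstacle is evaluating this sum sharply enough to secure the $O(N^2/d)$ bound. A naive Cauchy--Schwarz passage through the Hilbert--Schmidt norm is far too weak: using $\mathrm{tr}((\rho_R^N)^2) = (2N-1)!!/\prod_{j=0}^{N-1}(d+2j)$ together with $\mathrm{tr}(\rho_R^N\rho_C^N) = \mathrm{tr}((\rho_C^N)^2) = N!/\prod_{j=0}^{N-1}(d+j)$, and combining with the rank bound $\mathrm{rank}(\rho_R^N - \rho_C^N) \leq D_N \sim d^N/N!$, one obtains only $\|\rho_R^N - \rho_C^N\|_1 \lesssim \sqrt{(2N-1)!!/N!} \sim 2^{N/2}$, which is vacuous. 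The isotypic decomposition must therefore be used explicitly: on the leading block ($k=0$, dimension $\approx d^N/N!$) the eigenvalue $\lambda_{N,N}$ agrees with $1/D_N$ up to the leading $1/d$ correction, producing a $k=0$ contribution of order $N(N-1)/d$; on each subleading block the inflation of $\lambda_{N,N-2k}$ by a factor of order $d^k$ relative to $1/D_N$ is essentially compensated by dimension suppression, with the $k=1$ term again contributing $O(N(N-1)/d)$ and higher-$k$ terms decaying as $(N^2/d)^k$. Summing yields $\|\rho_R^N - \rho_C^N\|_1 = O(N^2/d)$ and completes the proof.
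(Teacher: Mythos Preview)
Your reduction to $C=2$, the passage through Lemma~\ref{lemma:twostate}, and the triangle-inequality step are exactly as in the paper; the heart of the matter is indeed bounding $\|\rho_R^N-\rho_C^N\|_1$ by $O(N^2/d)$. Your proposed route via the harmonic (Howe) decomposition of $\mathrm{Sym}^N(\mathbb{C}^d)$ under $O(d)$ is sound and would succeed, but it is more laborious than what the paper does, and your sketch leaves the block-by-block eigenvalue estimates as assertions rather than verified computations.

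The paper avoids the isotypic bookkeeping entirely. It uses the Brauer/pair-permutation expansion
\[
\rho_R^N \;=\; \frac{1}{Z}\sum_{M\in\mathcal{M}_{2N}}\sigma_M
\;=\; c\,P_{\vee^N} \;+\; O_{\mathrm{rest}},
\qquad c=\frac{N!}{\prod_{j=0}^{N-1}(d+2j)},
\]
and observes that the remainder $O_{\mathrm{rest}}$ (the contribution from matchings containing at least one contraction) is \emph{positive semidefinite}, because the combinatorial coefficients $c_k$ multiplying $P_{\vee^N}(\Phi^{\otimes k}\otimes\mathds{1}^{\otimes(N-2k)})P_{\vee^N}$ are nonnegative. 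This single structural fact collapses the whole estimate: $\|O_{\mathrm{rest}}\|_1=\Tr(O_{\mathrm{rest}})=1-c\,D_N$, and
\[
\|\rho_R^N-\rho_C^N\|_1 \;\le\; |c-\tfrac{1}{D_N}|\,D_N + \Tr(O_{\mathrm{rest}})
\;=\; 2\bigl(1-c\,D_N\bigr)
\;\le\; \frac{4N^2}{d}.
\]
No per-block eigenvalues $\lambda_{N,N-2k}$ are needed, and no asymptotic control of the $k\ge2$ terms has to be established separately. Your harmonic picture and the paper's pair-permutation picture are really the same decomposition in different language (the block index $k$ is the number of contractions), but the PSD observation is what buys the shortcut. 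If you want to keep your framing, the cleanest fix is to note that $\rho_R^N - c\,P_{\vee^N}\succeq 0$ (equivalently, $c=\lambda_{N,N}$ is the smallest block eigenvalue) and then run the two-line trace argument above instead of summing over $k$.
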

\begin{proof}
Let $d = 2^n$, and $C = 2$.  We will prove our exponential lower bound in the $C = 2$ case, since the $C > 2$ cases are at least as hard.  Let $|x_*\rangle$ be the Haar-random real state, and let $|x\rangle$ be the Haar-random complex state. Then either $|x_1\rangle = |x_*\rangle$ and $|x_2\rangle = |x\rangle$, or $|x_1\rangle = |x\rangle$ and $|x_2\rangle = |x_*\rangle$.  Following the same proof strategy as that of Proposition~\ref{prop:toyquantumbound1}, we suppose the quantum algorithm uses $N_1$ copies of $|x_1\rangle$ and $N_2$ copies of $|x_2\rangle$.  Then if the quantum algorithm solves the distinction problem, it can distinguish between $|x_*\rangle \langle x_*|^{\otimes N_1} \otimes |x\rangle \langle x|^{\otimes N_2}$ and $|x\rangle \langle x|^{\otimes N_1} \otimes |x_*\rangle \langle x_*|^{\otimes N_2}$.  Letting $N = \max(N_1, N_2)$, it is strictly easier to distinguish between $|x_*\rangle \langle x_*|^{\otimes N} \otimes |x\rangle \langle x|^{\otimes N}$ and $|x\rangle \langle x|^{\otimes N} \otimes |x_*\rangle \langle x_*|^{\otimes N}$.  Taking
\begin{equation}
\rho_1 = \Expect_{\text{\rm Haar}(\mathbb{C}^d)} \left(|\psi\rangle\langle \psi| \right)^{\otimes N} \otimes \Expect_{\text{\rm Haar}(\mathbb{R}^d)} \left(|\phi\rangle\langle \phi| \right)^{\otimes N}
\end{equation}
and
\begin{equation}
\rho_2 = \Expect_{\text{\rm Haar}(\mathbb{R}^d)} \left(|\phi\rangle\langle \phi| \right)^{\otimes N} \otimes \Expect_{\text{\rm Haar}(\mathbb{C}^d)} \left(|\psi\rangle\langle \psi| \right)^{\otimes N}\,,
\end{equation}
Lemma~\ref{lemma:twostate} tells us that if we want to distinguish with probability at least 0.9 then we need
\begin{equation} \label{eq:rho1-rho2-0.8}
    \|\rho_1 - \rho_2\|_1 \geq 0.8\,.
\end{equation}
Using the triangle inequality, we have
\begin{align}
\label{E:1normtobound}
    &\norm{\rho_1 - \rho_2}_1 \nonumber \\
    &\leq \!2\,\Big\|\!\Expect_{\text{\rm Haar}(\mathbb{C}^d)}\! \!\left(|\psi\rangle\langle \psi| \right)^{\otimes N} \!\!-\! \Expect_{\text{\rm Haar}(\mathbb{R}^d)}\!\! \left(|\phi\rangle\langle \phi| \right)^{\otimes N}\!\Big\|_1
\end{align}
and so it remains to bound the 1-norm quantity above.

Recall that
\begin{align}
&\Expect_{\text{\rm Haar}(\mathbb{R}^d)}\! \left(|\phi\rangle\langle \phi| \right)^{\otimes N} \!=\! \frac{1}{\left(\!\frac{2^N\, \Gamma(N+d/2)}{\Gamma(d/2)}\!\right)} \!\sum_{M \in \mathcal{M}_{2N}} \!\!\!\sigma_M
\end{align}
where $\cM_{2N}$ is the group of pair permutations and $\sigma_M$ is their representation on $(\mathbb{C}^d)^{\otimes N}$, and
\begin{align}
&\Expect_{\text{\rm Haar}(\mathbb{C}^d)} \left(|\psi\rangle\langle \psi| \right)^{\otimes N} = \frac{1}{\binom{d+N-1}{N}} \,P_{\vee^N}\,
\end{align}
where $P_{\vee^N}$ is the projection onto the completely symmetric subspace over $(\mathbb{C}^d)^{\otimes n}$ (see~\cite{harrow2013church} for a review).
From the definition of pair permutations, we have
\begin{align}
&\sum_{M \in \mathcal{M}_{2N}} \sigma_M = N!\,P_{\vee^N} \nonumber \\
&\qquad + \sum_{1 \leq k \leq N/2} c_k \, P_{\vee^N} \cdot \left(\Phi^{\otimes k} \otimes \mathds{1}^{\otimes(N-2k)}\right) \cdot P_{\vee^N}
\end{align}
where $\Phi = \frac{1}{2}(|00\rangle + |11\rangle)(\langle 00| + \langle 11|)$ is the density matrix of a Bell state, $\mathds{1}$ is the identity on one qubit, and the $c_k$ are combinatorial coefficients which are each greater than or equal to zero.
Therefore, we can write
\begin{align}
\Expect_{\text{\rm Haar}(\mathbb{R}^d)}\! \left(|\phi\rangle\langle \phi| \right)^{\otimes N} \!\!=\! \frac{N!}{\left(\!\frac{2^N\,\! \Gamma(N+d/2)}{\Gamma(d/2)}\!\right)} \, P_{\vee^N} \!+\! O_{\text{rest}},
\end{align}
where $O_{\text{rest}}$ is positive semi-definite.  Then we have
\begin{align}
&\left\|\Expect_{\text{\rm Haar}(\mathbb{C}^d)} \left(|\psi\rangle\langle \psi| \right)^{\otimes N} - \Expect_{\text{\rm Haar}(\mathbb{R}^d)} \left(|\phi\rangle\langle \phi| \right)^{\otimes N} \right\|_1 \nonumber \\ &\leq \left|\frac{N!}{\left(\frac{2^N\, \Gamma(N+d/2)}{\Gamma(d/2)}\right)} - \frac{1}{\binom{d+N-1}{N}} \right| \, \|P_{\vee^N}\|_1 + \|O_{\text{rest}}\|_1 \nonumber \\
&= \left|1 \!-\! \frac{(d+N-1)! \,(d/2-1)!}{2^N (d/2 + N-1)! \,(d-1)!}\right| + \text{tr}(O_{\text{rest}}) \nonumber\\
&\leq \left|1 \!-\! \frac{1}{\left(1 \!+\! 2N/d \right)^N} \right| \!+\! \left(1 \!-\! \frac{(d\!+\!N\!-\!1)! \,(d/2\!-\!1)!}{2^N (d/2\!+\!N-1)! \,(d\!-\!1)!} \right) \nonumber \\
&\leq 2 \left(1 - \frac{1}{\left(1 + 2N/d \right)^N}\right) \leq \frac{4 N^2}{d}\,.
\end{align}
Combining this with Eq.~\eqref{E:1normtobound}, we see that
\begin{equation}
    \norm{\rho_1 - \rho_2}_1 \leq \frac{4 N^2}{d}.
\end{equation}
From Eq.~\eqref{eq:rho1-rho2-0.8}, we have $N \geq \Omega(2^{n/2})$.

Since the quantum algorithm must use at least $N = \max(N_1,N_2) = \Omega(2^{n/2})$ state copies to solve the real vector search problem, and we assume the preparation of one of these states has unit cost (analogous to one use of the SQ oracle), the quantum algorithm must run in $\Omega(2^{n/2})$-time.
\end{proof}



\section{SQ access and \#P}

As a further illustration of the care that must be taken when invoking an SQ access model with respect to quantum states, we provide another observation: obtaining SQ access to a quantum state generated by a polynomial-time quantum circuit is at least as hard as solving \#P-complete problems.
Given a polynomial-time quantum circuit $U$ on $n$ qubits, we can consider the state
\begin{equation}
     \ket{\psi_U} = (\Id \otimes U^\dagger) \mathrm{CNOT}_{2 \rightarrow 1} (\Id \otimes U) (\ket{0} \otimes \ket{0^n})\,,
\end{equation}
where $\mathrm{CNOT}_{2 \rightarrow 1}$ is the CNOT gate controlled by the 2nd qubit and acting on the 1st qubit.
We readily see that $\braket{0^{n+1} | \psi_U}$ is equal to the probability of obtaining the outcome zero when we measure the first qubit of $U|0^n\rangle$.
If we have SQ access to $\ket{\psi_U}$, then we can consider $\mathrm{Query}(0 \ldots 0)$, which outputs an arbitrarily accurate estimate for $\braket{0^{n+1} | \psi_U}$.
Therefore, having SQ access to $\ket{\psi_U}$ enables us to achieve a strong quantum simulation of $U$.
Strong quantum simulation of a polynomial-time quantum circuit \cite{nest2008classical} is as hard as solving \#P-Complete problems, which implies the claimed result.

\section{Other types of classical access}

Our results establishing exponential separations between classical algorithms with SQ access and quantum algorithms with quantum state inputs suggest that SQ access may not be the appropriate classical analog of quantum state inputs.
In particular, SQ access to the vectors $x_1,...,x_C$ allow the classical algorithm to manipulate information which is exponentially hard to access from the states $|x_1\rangle,...,|x_C\rangle$.
Equivalently, any quantum algorithm which takes copies of the states $|x_1\rangle,...,|x_C\rangle$ and produces an oracle implementing SQ access requires at least exponential time complexity.

In order to modify the definition of SQ access to a putative $\widetilde{\text{SQ}}$ access which precludes the aforementioned pathologies, let us suggest a criterion:
\begin{criterion}
Give copies of a state $|x\rangle = \sum_i x_i |i\rangle$, there is a polynomial-time quantum algorithm to simulate the oracle $\widetilde{\text{\rm SQ}}(x)$.
\end{criterion}
\noindent For instance, we could let $\widetilde{\text{SQ}}$ access just have the sample operation.  Or alternatively, we could allow $\widetilde{\text{SQ}}$ to access outcomes of a polynomial-time POVM measurement applied to $|x\rangle$ (see, for instance,~\cite{huang2021information, aharonov2021quantum, chen2021exponential, huang2021demonstrating}).

\section{Comments on quantum state data}

Thus far, we have focused on how SQ access to a set of classical vectors $x_1,...,x_C$ provides much more information than can be learned from the quantum states $|x_1\rangle,...,|x_C\rangle$.  We can flip this logic around: the quantum states $|x_1\rangle,...,|x_C\rangle$ provide much less information than the set of classical vectors $x_1,...,x_C$.  To illustrate this point, consider the following modification of the minus sign search problem:
\begin{definition}[Unnormalized minus sign search problem]
Fix a constant $C = \mathcal{O}(1)$ and let $d = 2^n$.
Consider $x_1, \ldots, x_C \in \mathbb{C}^{d}$, where one of the vectors $x_{k^*}$ is $(-1,1,1,...,1)$ and the rest are $(1,1,1,...,1)$.
The goal is to output the vector $x_{k^*}$, namely the vector which has a component with a minus sign.\label{def:unnormalizedminussearch}
\end{definition}
Clearly this is trivial to solve with SQ access; the proof is the same as in the original minus sign search problem.  However, the point is that if we were to store the classical data as copies of quantum states $|x_1\rangle,...,|x_C\rangle$ whose components are multiplied by $1/\sqrt{d}$ factors on account of normalization, then the quantum algorithm will not be able to solve the minus sign search problem.
This is because the mapping of the classical vector $(-1,1,1,...,1)$ to the quantum state $(-\frac{1}{\sqrt{d}},\frac{1}{\sqrt{d}}, \frac{1}{\sqrt{d}},...,\frac{1}{\sqrt{d}})$ is not the appropriate classical-to-quantum encoding.  The more natural encoding is along the lines of
\begin{equation}
(-1,1,1,...,1) \mapsto |-\rangle \otimes |+\rangle \otimes |+\rangle \otimes \cdots \otimes |+\rangle\,.
\end{equation}
If this mapping were to be performed (and similarly for the vectors of all ones), then the quantum algorithm would be just as efficient as the classical algorithm with SQ access.

What the above highlights is that, while the QML literature often considers embedding classical vectors into the amplitudes of quantum states,
this may sometimes cause quantum algorithms to be exponentially slower than classical algorithms.

\section{Discussion}

We have shown that classical algorithms with SQ access can sometimes be exponentially more powerful than quantum algorithms with quantum state inputs.
{\color{black}This seemingly contradictory result shows that SQ access can be exponentially more powerful than quantum state inputs.}
At the same time, our results imply that constructing SQ access from copies of a quantum state $\rho$ requires exponentially many copies of $\rho$, an exponential number of measurements, and hence an exponential amount of time.

In order to avoid results stating that classical algorithms are more powerful than quantum algorithms, we could consider classical algorithms with access to an oracle that is at most as powerful as inputs given to the quantum algorithms.
For example, when comparing to quantum algorithms with quantum state inputs, we could consider classical algorithms with access to classical data obtained by measuring the quantum states.
Classical algorithms with access to measurement data are still very powerful, being capable of predicting outcomes of quantum experiments \cite{huang2021information, aharonov2021quantum, chen2021exponential, huang2021demonstrating}, classifying quantum phases of matter \cite{huang2021provably}, predicting ground state properties \cite{huang2021provably}, etc.
However, classical algorithms with measurement data access will never be more powerful than quantum algorithms with quantum state inputs because quantum algorithms can always perform the measurements within the algorithm.
This would then avoid perplexing statements such as that classical algorithms can be exponentially faster than quantum algorithms.

An important future direction is to study how quantum advantage becomes evident when we replace classical algorithms leveraging the powerful SQ access with measurement data access.
It is likely that various learning tasks considered to have no exponential quantum advantage are thought of as such due to the power of SQ access.
For example, we established an exponential quantum advantage for quantum principal component analysis (quantum PCA) in~\cite{huang2021demonstrating} when we compare quantum algorithms with quantum state inputs to classical algorithms with access to measurement data.
This result contrasts with the lack of exponential advantage in quantum PCA~\cite{tang2018quantum} when we compare to classical algorithms with SQ access.
By comparing quantum algorithms with quantum state inputs to classical algorithms with access to measurement data,
we are hopeful that new and significant quantum advantages could be established, and that the grounds for claiming such quantum advantages will be made clearer.

{\color{black} Finally, consider the setting where the quantum state $\ket{x} = \sum_{i=1}^d x_i \ket{i}$ is obtained from an exponentially long classical vector $x \in \mathbb{C}^d$ stored in classical RAM (random access memory).
In this setting, constructing an SQ access is less demanding than constructing quantum state inputs (although both require an exponential amount of time). Hence, it is reasonable to compare classical algorithms with SQ access to quantum algorithms with quantum state inputs when the quantum states are obtained from classical data.
We emphasize that in this setting, quantum algorithms that have been dequantized \cite{tang2018quantum, tang2019quantum, gilyen2018quantum, chia2020sampling} do not yield an exponential speedup.
Furthermore, our results show that classical algorithms could be significantly more powerful than quantum algorithms with quantum states that encode classical data in the amplitudes.
}

\subsection*{Acknowledgments:}
\vspace{-0.5em}

We thank Scott Aaronson and Ewin Tang for valuable discussions, and thank Sam McArdle for catching typos.
JC is supported by a Junior Fellowship from the Harvard Society of Fellows, the Black Hole Initiative, as well as in part by the Department of Energy under grant {DE}-{SC0007870}. HH is supported by a Google PhD Fellowship.

\bibliography{references}
\bibliographystyle{abbrv}

\end{document}